\long\def\ca#1\cb{}
\newcommand{\ket}[1]{\vert #1\rangle}               %ket
\newcommand{\bra}[1]{\langle #1\vert}              %bra
\newcommand{\dya}[1]{\ket{#1}\!\bra{#1}}
\newcommand{\ip}[2]{\langle #1\vert#2\rangle}      %the inner product
\newcommand{\EC}{\mathcal{E}}
\newcommand{\Tr}{{\rm tr}}
\renewcommand{\geq}{\geqslant}
\renewcommand{\leq}{\leqslant}
\newcommand{\mte}[2]{\langle#1|#2|#1\rangle }
\newcommand{\ot}{\otimes}
\def\id{{\leavevmode\rm 1\mkern -5.4mu I}}
\newcommand*{\Hmin}{H_{\min}}
\newcommand*{\Hmax}{H_{\max}}
\newcommand*{\hmax}{h_{\max}}
\newcommand{\al}{\alpha }
\newcommand{\Dl}{\Delta}
\renewcommand{\th}{\theta } %Latex \th = thor n
\newcommand{\lm}{\lambda }
\newcommand{\sg}{\sigma }
\newtheoremstyle{example}{\topsep}{\topsep}%
{}%         Body font
{}%         Indent amount (empty = no indent, \parindent = para indent)
{\bfseries}% Thm head font
{.}%        Punctuation after thm head
{   }%     Space after thm head (\newline = linebreak)
{\thmname{#1}\thmnumber{ #2}}%\thmnote{ #3}}%         Thm head spec
\theoremstyle{example}
\theoremstyle{definition}
\newtheorem{theorem}{Theorem}
\newtheorem{lemma}[theorem]{Lemma}
\newcommand*{\tr}{\mathrm{tr}}
\newcommand*{\braket}[2]{\langle #1 | #2 \rangle}
\newcommand*{\cE}{\mathcal{E}}
\newcommand*{\cH}{\mathcal{H}}
\newcommand*{\cI}{\mathcal{I}}
\newcommand*{\sD}{\textsf{D}}
\newcommand*{\sE}{\textsf{E}}
\newcommand*{\se}{\textsf{e}}
\newcommand*{\etab}{\bar{\eta}}
\newcommand*{\bP}{\textsc{P}}
\newcommand*{\bQ}{\textsc{Q}}
\newcommand*{\bX}{\textsc{X}}
\newcommand*{\bY}{\textsc{Y}}
\newcommand*{\bZ}{\textsc{Z}}
\newcommand*{\RR}{R}
\newcommand{\rhoE}{\rho^{\cE}}
\definecolor{myred}{rgb}{1,0,0}
\definecolor{myblue}{rgb}{0,0,0.8}
\definecolor{myyellow}{rgb}{0.9,0.8,0}
\definecolor{mygreen}{rgb}{0,0.6,0}
\definecolor{myorange}{rgb}{0.6,0.6,0}
\definecolor{mycerul}{rgb}{0,0.6,1}
\begin{document}

\title{State-dependent approach to entropic measurement-disturbance relations}

\author{Patrick J. Coles}
\email[]{pat@nus.edu.sg}
\affiliation{Centre for Quantum Technologies, National University of Singapore, 2 Science Drive 3, 117543 Singapore.}
\affiliation{Institute for Quantum Computing and Department of Physics and Astronomy, University of Waterloo, N2L 3G1 Waterloo, Ontario, Canada}

\author{Fabian \surname{Furrer}}
\email[]{furrer@eve.phys.s.u-tokyo.ac.jp}
\affiliation{Department of Physics, Graduate School of Science, University of Tokyo, 7-3-1
Hongo, Bunkyo-ku, Tokyo, Japan, 113-0033.}

\begin{abstract}
Heisenberg's intuition was that there should be a tradeoff between measuring a particle's position with greater
precision and disturbing its momentum. Recent formulations of this idea have focused on the question of
how well two complementary observables can be jointly measured. Here, we provide an alternative approach based on how enhancing the predictability of one observable necessarily disturbs a complementary one. Our measurement-disturbance relation refers to a clear operational scenario and is expressed by entropic quantities with clear statistical meaning. We show that our relation is perfectly tight for all measurement strengths in an existing experimental setup involving qubit measurements.
\end{abstract}

\maketitle

\section{Introduction}

Heisenberg's uncertainty principle~\cite{Heisenberg} is one of the most central concepts in quantum physics and with increasing experimental abilities to control quantum degrees of freedom it is no longer only interesting from a theoretical view; it is now practically relevant. For instance, it provides limits on quantum metrology \cite{Giovannetti:2011fk} and can be used to prove security in quantum cryptography~\cite{TLGR,PhysRevLett.109.100502}.
Moreover, experimental setups are now capable of sensitively testing such formulations \cite{ViennaNatPhys2012,TorontoPRL2012,Li11,Prevedel11, PhysRevLett.110.220402, PhysRevLett.112.020401}. These advances demand tight, operationally-meaningful formulations of the uncertainty principle.

The most common formulation of the uncertainty principle gives a limit on one's ability to prepare a system with low uncertainty for two complementary observables $X$ and $Z$. Textbooks often illustrate this with Robertson's \cite{Robertson} bound on the standard deviations
\begin{equation}\label{eq:Robertson}
\Dl X \Dl Z \geq \frac{1}{2} | \bra{\psi} [ X, Z ] \ket{\psi} | \, ,
\end{equation}
which generalised Kennard's \cite{kennard1927quantum} earlier relation for position and momentum observables $\Dl Q \Dl P \geq \hbar/2$.

A more subtle aspect of the uncertainty principle concerns not preparation limitations but rather measurement limitations \cite{Busch07}, for example, the idea that one cannot build a device that jointly measures $X$ and $Z$. Much progress has recently been made on quantitative tradeoffs for the accuracy of a joint measurement device~\cite{Busch13, BuschEtAl2014, PhysRevLett.112.050401,Renes2014}. One approach considers state-dependent errors using the root-mean-square (RMS) expectation of the noise operator \cite{Ozawa04, HallPhysRevA.69.052113, Branciard13} while a different approach considers calibrating the apparatus on idealised input states associated with the $X$ and $Z$ observables, using either RMS~\cite{Busch13, BuschEtAl2014} or information-theoretic \cite{PhysRevLett.112.050401, ColesEtAl} measures.

A different aspect of measurement uncertainty that will be the topic of this Letter considers a sequential measurement setting where one asks how measuring an observable $X$ disturbs the outcome of a future $Z$ measurement (see e.g.~\cite{Busch07}). On a more intuitive level this question was first discussed by Heisenberg~\cite{Heisenberg} for position and momentum observables. Modern approaches aim to rigorously prove a trade-off between extracting some amount of information about $X$ versus affecting a subsequent $Z$ measurement, a so-called measurement-disturbance relation (MDR). (We note that this should not be confused with a different approach~\cite{PhysRevA.53.2038, 0295-5075-77-4-40002} that considers the measurement-induced disturbance of the overall quantum state, rather than the disturbance of a specific observable $Z$.) Because sequential measurement can be thought of as an attempted joint measurement, a joint measurement relation also implies a corresponding MDR. MDRs associated with the noise-operator approach \cite{Ozawa04, HallPhysRevA.69.052113, Branciard13} have been the subject of interesting experimental tests~\cite{ViennaNatPhys2012,TorontoPRL2012, PhysRevLett.110.220402, PhysRevLett.112.020401}, and yet have also been recently criticised for lacking operational significance in the general case~\cite{KorzekwaEtAl2013, BuschEtAl2014,PhysRevA.89.022106, 2013arXiv1312.4393B}. This led to a major theoretical effort within the past year to find operationally meaningful MDRs \cite{Busch13, BuschEtAl2014, PhysRevLett.112.050401, Renes2014}. These very recent MDRs, which indeed have clear meanings, have abandoned the state-dependent nature of the noise-operator approach in favour of a state-independent approach.

On the other hand, we are interested in whether an operationally meaningful MDR exists for \textit{state-dependent} notions of error and disturbance. This has come under question since recent observations \cite{KorzekwaEtAl2013, 2013arXiv1312.4393B} imply that no non-trivial state-dependent MDR can be formulated for ``faithful'' error and disturbance measures, i.e., which only vanish if the measurement reproduces the statistics of a perfect $X$ measurement (in the case of error) and if there is no change to the statistical distribution for a subsequent $Z$ measurement (in the case of disturbance). We remark that here the typical notion of measurement error considers retrodiction, i.e., the accuracy of the $X$ measurement is judged by how well one can infer the \textit{past} value of $X$. However, Appleby showed for position and momentum observables~\cite{Appleby1,Appleby981} that there also exists a trade-off between disturbance and \textit{predictive} error.

In this Letter we present a state-dependent MDR involving a predictive measurement error for $X$ and a comparative disturbance measure for $Z$. The predictive error quantifies the correlation between the outcome of the measurement instrument and a future $X$ measurement. The disturbance compares the disturbed $Z$ distribution to the $Z$ distribution in the absence of the measurement instrument. Both of our errors, which are faithful and operational, are expressed by entropic quantities with significance in an information-theoretic context. As a novel extension we also consider the disturbance of the system's correlation with the environment. 

Furthermore, we apply our relation to position and momentum observbles $Q$ and $P$ and show a trade-off of the form
\begin{equation}
(Q\text{ precision}) \cdot (P\text{ disturbance}) \geq \frac{\hbar}{P \text{ initial uncertainty}}\nonumber \, 
\end{equation}
where the precision relates to the coarse graining of the position measurement. Thus, we rigorously capture Heisenberg's intuition, that measuring the position more precisely disturbs the momentum, and yet we emphasize that the tradeoff is weakened for input states with a high momentum uncertainty.

\section{Error and disturbance measures}

\begin{figure}[t] %word count: 125 
\begin{center}
\includegraphics[width = 8cm]{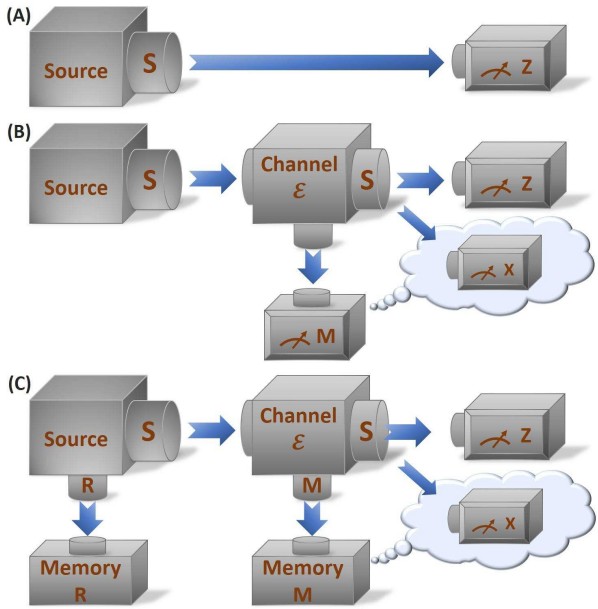}
\caption{(A): We consider a source $S$ sending a quantum state to a receiver which measures observable $\bZ$. (B): During the transmission a channel combined with a measurement is used to extract information $M$ about a second observable $\bX$. Our relation captures the tradeoff between the disturbance of the $Z$ distribution in (B) compared with the undisturbed situation (A) and the predictability of a hypothetical future measurement of $\bX$ given the information $M$. In the memory-assisted situation (C), $M$ can be a quantum memory and the disturbance of the correlations between $S$ and an isolated reference system $R$ is included. \label{fgrSetup1}}
\end{center}
\end{figure}

The physical scenario of interest is illustrated in Fig.~\ref{fgrSetup1}. We consider a system $S$ prepared in state $\rho_S$ and sent to a receiver who performs a measurement of the observable $\bZ$. During the transmission of $S$ to the receiver, an interaction $\cE$ is applied that intends to extract information about a complementary observable $\bX$. For simplicity, we assume for now that both observables $\bX$ and $\bZ$ are sharp and specified by orthonormal eigenstates $\{\ket{\bX_x}\}_{x\in X}$ and $\{\ket{\bZ_z}\}_{z\in Z}$, where $X$ and $Z$ are finite ranges. The treatment of more general observables is straightforward and considered later. The outputs of the interaction are the original system $S$ along with a classical system $M$ which is supposed to contain information that reduces the uncertainty about a future $\bX$ measurement. In the following we denote the $Z$ distribution of the initial state as $P_Z$ and the one after the interaction $\cE$ as $P^{\cE}_Z$. The joint probability distribution of $M$ and $X$ after the interaction is denoted by $Q^\cE_{MX}$.

Our goal is to define an operational measure for disturbance of the $Z$ degree of freedom which only depends on measurable quantities, that is, the probabilities $P_Z$ and $P_Z^\cE$. We further require that the disturbance is non-vanishing if and only if $P_Z \neq P_Z^\cE$ which we call a faithful disturbance measure. Natural candidates for faithful quantification of disturbance are thus distance measures between $P_Z$ and $P_Z^\cE$. A family of important information theoretic quantities with distance-like properties are the R\'enyi relative entropies~\cite{renyi1961} 
\begin{equation}
D_\alpha(P|| Q) = \frac{1}{\alpha-1}  \log \bigg( \sum_z   P(z)^\alpha Q(z)^{1-\alpha} \bigg) \, ,
\end{equation}
where $\alpha\in[1/2,\infty]$ and $\alpha=1,\infty$ are defined as the corresponding limit. Here and in the following all logarithms are in base 2. For $\alpha=1$ we obtain the well-known relative entropy (or Kullback-Leibler divergence) $D(P|| Q) = \sum_z P(z)\log ( P(z)/Q(z) )$. Another important example is obtained for $\alpha = 1/2$ where one finds that $D_{1/2}(P|| Q) = - \log F(Q,P)$ with $F(Q,P) = (\sum_z \sqrt{P(z)Q(z)})^2$ the fidelity between $P$ and $Q$. We note that the fidelity induces a metric equivalent to the statistical distance between $P$ and $Q$, and thus, the distinguishability of $P$ and $Q$. In what follows, we also need the R\'enyi entropies~\cite{renyi1961}, which can be defined through the R\'enyi relative entropies by 
\begin{equation}
H_\alpha(Z)_P = \log d - D_\alpha ( P|| \id / d ) \, ,
\end{equation} 
where $d$ is the Hilbert space dimension and $\id / d$ denotes the uniform distribution. For $\alpha=1$, we recover the Shannon entropy $H(Z)=-\sum_z p(z)\log p(z)$.

We now use the R\'enyi relative entropies to define a family of faithful disturbance measures 
\begin{align}\label{eq:DistMeas}
\sD_\alpha (\rho_S, \bZ, \cE) :=  D_\alpha( P_Z || P_Z^{\cE}) \, , 
\end{align}
where $\alpha\in[1/2,\infty]$. Particularly important in an information theoretic context is the relative entropy $D( P_Z || P_Z^{\cE})$ (i.e., $\alpha =1$), which quantifies to what extent large samples from $P_Z$ can be simulated by actually sampling from $ P_Z^{\cE}$. More precisely, it determines the probability to confuse sampling from $ P_Z^{\cE}$ with sampling from $P_Z$ for large samples, or the probability that a large sample from $ P_Z^{\cE}$ is typical for $P_Z$ (see e.g.~\cite{VedralReview02,Cover91}). This scenario has direct relevance in hypothesis testing~\cite{Cover91}. Moreover, a similar role is played by the R\'enyi relative entropies for $\alpha\neq 1$ if one considers higher order corrections or non-asymptotic behaviors~\cite{Csiszar1995}.

The performance of the interaction $\EC$ with respect to $X$ will be characterized by a predictive error, quantifying the degree of correlation between $M$ and a future $X$ measurement. Note that no non-trivial trade-off exists between our faithful disturbance measure and a retrodictive error, the usual measure of measurement accuracy (see also~\cite{KorzekwaEtAl2013}). One can see this by considering the following example: for any input state $\rho$, the interaction can simply consist of a perfect $X$ measurement followed by a (re)preparation of $\rho$, resulting in no apparent disturbance of $Z$. In contrast, a disturbance is obtained if we require that the interaction preserves the correlations between the extracted information $M$ and the $X$ eigenbasis, which directly translates into the predictability of a future $X$ measurement. This example also illustrates that an accurate $X$ measurement does not necessarily imply a small predictive error.

As in the case of the disturbance measure, the goal is to define an operational measure that only depends on the correlation of $M$ and the outocme of a future $X$ measurement determined by $Q^\cE_{MX}$. Moreover, the measure should be faithful in the sense that it is non-vanishing if and only if $M$ determines $X$ without uncertainty. From information theory, we know that conditional entropy measures are well-suited for that purpose. We thus quantify the predictive error by the conditional max-entropy~\cite{KonRenSch09}
\begin{align}\label{eq:ErrorMeas}
\sE(\rho_S, \bX, \cE) := H_{\max}(X|M)_{Q^\cE} \, ,
\end{align}
henceforth simply referred to as error. 

The conditional max-entropy is part of a family of entropies used to quantify resources beyond their behavior in the limit of infinitely many copies and is related to the amount of additional data that must be supplied to the observer, given that they have access to $M$, to learn the outcome of a future $\bX$ measurement \cite{RenesRenner2012}. In formulas the error is given by $ \log \sum_m Q^\cE_M(m) \exp (H_{1/2}(Q^{\cE,m}_{X} ))$, where $Q^\cE_M$ is the reduced probability distribution of $M$ and $Q^{\cE,m}_{X}$ is the conditional probability distribution of $X$ given $m\in M$.

\section{Measurement-disturbance relation}

Our main result gives a tradeoff between the $\bZ$ disturbance and the predictive error of the $\bX$ measurement. The tradeoff is stronger when $\bX$ and $\bZ$ are more complementary as quantified by the state-independent overlap
\begin{equation}\label{eq:Overlap}
c = \max_{x,z} |\braket{\bX_x}{\bZ_z}|^2 \, .
\end{equation}
Yet the trade-off is weaker as more initial uncertainty is contained in $P_Z$. More precisely, for any input state $\rho_S$ and interaction $\cE$, the MDR
\begin{align}
\label{eq1DiscURclas}
\sD_\alpha(\rho_S, \bZ, \cE)   + \sE(\rho_S, \bX, \cE) + H_{\alpha}(Z)_{P} \geq  \log 1/c  \, 
\end{align}
holds for all $\alpha\in [1/2,\infty]$.

We note that the additional term in \eqref{eq1DiscURclas} that quantifies the initial uncertainty of the $Z$ distribution is crucial. In order to see this consider for example the situation where $\bX$ and $\bZ$ are fully complementary, so-called mutually unbiased bases defined by $c = 1/d$, so that $\log 1/c = \log d$. Also suppose $\cE$ does a perfect $\bX$ measurement, so the error is zero. The disturbance is also zero, e.g., if $\rho_S$ is diagonal in any basis $\bY$ that is mutually unbiased to $\bZ$, since both the input and output probability distributions for $\bZ$ are uniform. We remark that, for this example, \eqref{eq1DiscURclas} is satisfied with equality \textit{for all input states} $\rho_S$. This is because doing an $\bX$ measurement followed by a $\bZ$ measurement always results in $\rhoE_Z = \id / d$, and we have $D_\alpha (\rho_Z || \id / d) = \log d - H_\alpha(Z)_{\rho}$.

For a given interaction $\cE$ and input state $\rho_S$, \eqref{eq1DiscURclas} actually represents two constraints: one given by \eqref{eq1DiscURclas} and another obtained from interchanging the roles of $\bX$ and $\bZ$ in \eqref{eq1DiscURclas}. For certain examples, one constraint may be significantly stronger than the other.

Before proving~\eqref{eq1DiscURclas}, we first discuss how our MDR can be further strengthened in two directions.
First, we can extend the scope of the MDR to include the disturbance of the system's correlations with a memory system, see Fig.~\ref{fgrSetup1}(C). Assume that system $S$ may be initially correlated to some other quantum system $\RR$, which we think of as an isolated memory system kept in the sender's lab while sending only $S$ to the receiver. The correlations between $S$ and $\RR$ may be disturbed by the interaction $\cE$. Let us denote the combined state of the quantum system $R$ and the classical outcomes $Z$ with and without interaction by $\rho^\cE_{ZR}$ and $\rho_{ZR}$. The memory-assisted disturbance is then defined as the distance between $\rho^\cE_{ZR}$ and $\rho_{ZR}$ defined for $\alpha\in [1/2,\infty]$ by 
\begin{equation} 
\sD_\alpha(\rho_{S\RR}, \bZ, \cE) = D_\alpha( \rho_{Z\RR} || \rho_{Z\RR}^{\cE}) \, ,
\end{equation}
where $D_\alpha(\rho||\sigma) = 1/(1-\alpha)\log \Tr [(\sg^{\frac{1-\al}{2\al}}\rho \sg^{\frac{1-\al}{2\al}})^\al]$ is a recently defined quantum generalization of the R\'enyi relative entropy~\cite{lennert13,Wilde13}. The cases $\alpha =1,\infty$ are defined as the respective limits.   

It is easy to see that setting $\RR$ to a trivial system recovers the previous (classical) notion of disturbance. For $\alpha=1$, we obtain the quantum relative entropy $D(\rho || \sg) = \Tr(\rho\log \rho) - \Tr(\rho\log\sg)$ and for $\alpha=1/2$ the logarithm of the quantum fidelity.  Moreover, the quantum relative entropy as well as the Renyi relative entropy retain their operational relevance to hypothesis testing, see e.g.,~\cite{VedralReview02,MosHia11,MosOga2013arXiv1309.3228M}. 

Replacing now the disturbance measure by its extended version $\sD_\alpha(\rho_{S\RR}, \bZ, \cE)$  in the MDR~\eqref{eq1DiscURclas}, the bound can be strengthened by replacing $H_\alpha(Z)$ on the left hand side by the corresponding conditional version~\cite{lennert13}
\begin{equation}\label{def:QMentropy}
H_\alpha(Z|R)_\rho = \max_{\eta_R}[- D_{\al}(\rho_{ZR} || \id \ot \eta_R)] \, . 
\end{equation}
This improves the bound since $H_\alpha(Z|R) \leq H_\alpha(Z)$ holds for any $\alpha $. Note that for $\alpha =1$, we obtain the conditional von Neumann entropy $H(A|B)_\rho = H(AB)_ \rho- H(B)_\rho$ with $H(A)\rho=-\tr\rho_A\log\rho_A$. Moreover, the quantum conditional min- and max-entropy which play an important role in non-asymptotic information theory (see e.g.~\cite{TomamichelThesis2012}) are given by $H_{\max}(A|B)_\rho =  H_{1/2}(A|B)_\rho$ and $H_{\min}(A|B)_\rho =  H_{\infty}(A|B)_\rho$. 

Secondly, we note that the system $M$ is not necessarily restricted to be classical but can be an arbitrary quantum system \cite{BertaEtAl}. In this case the distribution $Q^\cE_{XM}$ is replaced by a classical quantum state $\rho^{\cE}_{XM}$ and the predictive error is defined via the quantum conditional max-entropy $\sE(\rho_S, \bX, \cE) := H_{\max}(X|M)_{Q^\cE} $.  Since measuring the quantum system $M$ can only increase the uncertainty about $X$, this provides a stronger bound.

\begin{theorem} 
Let $S$, $M$ and $R$ be finite-dimensional quantum systems, $\rho_{SR}$ a quantum state on $SR$ and $\bX$ and $\bZ $ observables on $S$ given by positive operator valued measures $\{{\bX_x}\}_{x\in X}$ and $\{{\bZ_z}\}_{z\in Z}$, respectively. Then, for all trace preserving completely positive maps $\cE$ from $S$ to $SM$ and $\alpha\in[1/2,\infty]$, it holds that 
\begin{align}
\label{eq1DiscUR}
\sD_\alpha(\rho_{S\RR}, \bZ, \cE)   + \sE(\rho_{S}, \bX, \cE) +H_{\alpha}(Z|\RR)_{\rho} \geq  \log \frac1c   \,  , 
\end{align}
with $c= \max_{x,z} \|\sqrt{\bX_x}\sqrt{\bZ_z}\|_{\infty}^2$ and $\| \cdot \|_{\infty}$ the supremum norm (i.e., the largest singular value). 
\end{theorem}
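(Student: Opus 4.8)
The plan is to reduce the theorem to two ingredients: a known entropic uncertainty relation with quantum memory applied to the post-interaction state, and a ``triangle-type'' lemma that converts the disturbance relative entropy into a conditional min-entropy, after which the two combine by cancellation.

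First I would fix the post-interaction state $\rho^{\cE}_{SMR} = (\cE\ot\id_R)(\rho_{SR})$, so that $\rho^{\cE}_{ZR}$ (resp.\ $\rho^{\cE}_{XM}$) is the classical-quantum state obtained by measuring $\bZ$ (resp.\ $\bX$) on $S$ and discarding the complementary memory. To this tripartite state I would apply the POVM version of the R\'enyi entropic uncertainty relation with memory (see, e.g., \cite{TomamichelThesis2012,BertaEtAl}), taking the conjugate parameters $1/2$ and $\infty$ (which satisfy $1/\alpha+1/\beta=2$):
\begin{equation}
H_{\max}(X|M)_{\rho^{\cE}} + H_{\min}(Z|R)_{\rho^{\cE}} \geq \log\frac1c \, ,
\end{equation}
with $c = \max_{x,z}\|\sqrt{\bX_x}\sqrt{\bZ_z}\|_\infty^2$. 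The first term is exactly the error $\sE(\rho_S,\bX,\cE)$, so it remains only to absorb $H_{\min}(Z|R)_{\rho^{\cE}}$ using the disturbance and the initial uncertainty.

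The key step is the disturbance lemma
\begin{equation}
\sD_\alpha(\rho_{SR},\bZ,\cE) + H_\alpha(Z|R)_\rho \geq H_{\min}(Z|R)_{\rho^{\cE}} \, ,
\end{equation}
valid for all $\alpha\in[1/2,\infty]$. To prove it I would let $\eta_R^\star$ attain the maximum in $H_{\min}(Z|R)_{\rho^{\cE}} = \max_{\eta_R}[-D_\infty(\rho^{\cE}_{ZR}\|\id\ot\eta_R)]$ and set $\lambda = 2^{D_\infty(\rho^{\cE}_{ZR}\|\id\ot\eta_R^\star)}$, so that by the definition of the max-relative entropy $\rho^{\cE}_{ZR}\leq\lambda(\id\ot\eta_R^\star)$ and $H_{\min}(Z|R)_{\rho^{\cE}}=-\log\lambda$. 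Rearranging gives $\id\ot\eta_R^\star\geq\lambda^{-1}\rho^{\cE}_{ZR}$; invoking the monotonicity of $D_\alpha(\rho\|\cdot)$ in its second argument (valid for $\alpha\geq1/2$) together with the scaling identity $D_\alpha(\rho\|c\sigma)=D_\alpha(\rho\|\sigma)-\log c$ then yields
\begin{equation}
D_\alpha(\rho_{ZR}\|\id\ot\eta_R^\star) \leq D_\alpha(\rho_{ZR}\|\rho^{\cE}_{ZR}) + \log\lambda \, .
\end{equation}
Since $H_\alpha(Z|R)_\rho\geq -D_\alpha(\rho_{ZR}\|\id\ot\eta_R^\star)$ by using the (suboptimal) $\eta_R^\star$ in its own variational formula, the lemma follows. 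Adding the lemma to the uncertainty relation cancels the $\pm H_{\min}(Z|R)_{\rho^{\cE}}$ terms and reproduces the claimed bound; the boundary cases $\alpha\in\{1,\infty\}$ are handled by taking limits.

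The main obstacle I anticipate is the disturbance lemma rather than the uncertainty relation, which I treat as a cited black box. The delicate points are (i) confirming that the monotonicity and scaling properties of the sandwiched R\'enyi divergence hold across the entire range $\alpha\in[1/2,\infty]$ and under the relevant support conditions on $\rho^{\cE}_{ZR}$, and (ii) justifying that the single reference state $\eta_R^\star$ may legitimately serve as the optimizer for the min-entropy while being merely a feasible point for the $\alpha$-entropy. I would also verify the measurement-and-discarding bookkeeping so that the state $\rho^{\cE}_{ZR}$ appearing in the disturbance $\sD_\alpha$, in the uncertainty relation, and in the lemma is literally one and the same, which is what licenses the cancellation.
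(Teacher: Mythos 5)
Your proposal is correct and follows essentially the same route as the paper: the paper's proof also combines the min/max entropic uncertainty relation with memory applied to $\rho^{\cE}_{SMR}$ (its Eq.~\eqref{eq:PrepUR}) with precisely your disturbance lemma, which appears there as the inequality $D_{\al}(\rho_{AB}\|\sg_{AB}) \geq H_{\min}(A|B)_{\sg} - H_{\al}(A|B)_{\rho}$, proved in the identical way (optimal $\eta_B$ for the min-entropy, monotonicity in the second argument, the scaling identity, and suboptimality in the variational formula for $H_\al$). The delicate points you flag are handled in the paper exactly as you anticipate, by citing the properties of the sandwiched R\'enyi divergence from the literature for the full range $\al\in[1/2,\infty]$.
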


\begin{proof}
An important technical ingredient in the derivation of the MDR is the inequality
\begin{equation}\label{MainInequ}
D_{\al}(\rho_{AB} || \sg_{AB})  \geq H_{\min}(A|B)_{\sg} - H_{\al}(A|B)_{\rho} \, ,
\end{equation}
which holds for any two states $\rho_{AB}$ and $\sg_{AB}$ and all $\alpha \in  [1/2,\infty ]$. This inequality follows from two basic properties of the Renyi relative entropies~\cite{lennert13}
\begin{align}
  D_{\al}(\rho||\sigma) &\geq  D_{\al}( \rho|| \eta) \text{  if  } \eta \geq \sigma , \label{Cond:a} \\
  D_{\al}(\rho|| \lambda \sigma)& = D_{\al}(\rho||\sigma)-\log \lambda \, \text{ for any  } \lambda>0\, ,  \label{Cond:b}
\end{align}
where the condition in \eqref{Cond:a} means that $\eta -\sigma$ is positive semi-definite, and from the fact that the min-entropy can be written as~\cite{KonRenSch09}
\begin{equation*}\label{eq:minEnt}
H_{\min}(A|B)_{\sg} =-\log  \min_{\eta_{B}} \min\{\lm : \sg_{AB} \leq \lm \id \otimes \eta_{B} \} .
\end{equation*}
In particular, let $\etab_B$ be a state for which the minimum in the above equation is attained, and thus, satisfies $\sg_{AB} \leq  2^{-H_{\min}(A|B)_{ \sg}} \id \ot \etab_B $. We can then compute from~\eqref{Cond:a} and~\eqref{Cond:b} that
\begin{align}
D_{\al}(\rho_{AB} || \sg_{AB}) &\geq D_{\al}(\rho_{AB} ||2^{-H_{\min}(A|B)_{ \sg} }\id \otimes \etab_B ) \notag \\
&  \geq D_{\al}(\rho_{AB} ||\id \otimes \etab_B )  + H_{\min}(A|B)_{ \sg } \, . \notag
\end{align}
Using the definition of $H_{\al}(A|B)_{\rho}$ from~\eqref{def:QMentropy}, we eventually find inequality~\eqref{MainInequ}.

The second ingredient is the preparation uncertainty relation with quantum memory for the min- and max-entropy~\cite{TomRen2010} applied to the state after the interaction $\rho^\cE_{SMR} = ( \cE \ot \cI)(\rho_{SR})$ 
\begin{equation} \label{eq:PrepUR}
\Hmin(Z|R)_{\rhoE} + \Hmax(X|M)_{\rhoE} \geq \log \frac{1}{c} \, .
\end{equation}
Combining the above relation with inequality~\eqref{MainInequ} for the case where $\rho_{AB}$ is replaced by $\rho_{ZR} $ and $\sg_{AB}$ by $\rhoE_{ZR} $, we arrive at our general memory-assisted MDR~\eqref{eq1DiscUR}. 
\end{proof}

\subsection{Predictions for experiments}

To demonstrate the performance of our MDR, we analyze a recent experiment on photon polarization~\cite{TorontoPRL2012} that tested Ozawa's MDR~\cite{Ozawa04}. In the following, we restrict ourselves to the disturbance measure for $\alpha =1$ based on the relative entropy, simply for illustration purposes.

The experiment implements a weak measurement of $\bX$, using a CNOT gate (controlled by the $\bX = \{\dya{+},\dya{-}\}$ basis) with the probe photon initially prepared in the state $\ket{\phi} = \cos (\th/2) \ket{0}+\sin (\th/2) \ket{1}$. The probe is then measured in the standard basis. Let us write the input state using the Bloch sphere representation $\rho_S = (\id +\vec{r} \cdot \vec{\sg})/2$, with $\vec{r} = r_x \hat{x}+r_y \hat{y}+r_z \hat{z}$. A straightforward computation gives 
\begin{align}
\sE(\rho_S, \bX, \cE) & = \log(1+\sqrt{1-r_x^2}\sin \th),  \label{eq:ErrorExp} \\ 
\sD (\rho_S, \bZ, \cE) & = \frac{1+r_z}{2} \log \big( \frac{1+ r_z}{1+r_z \sin \th}\big) \label{eq:DistExp}\\
&\hspace{10pt} +\frac{1-r_z}{2} \log \big( \frac{1- r_z}{1-r_z \sin \th} \big),\notag \\
H (Z)_{P}& =1- \frac{1+r_z}{2} \log \big( 1+ r_z \big) \label{eq:EntExp}\\
&\hspace{10pt} - \frac{1-r_z}{2} \log \big( 1- r_z \big). \notag
\end{align}

In Figure~\ref{fgr3Toronto}, we plot error, disturbance, and the tightness of our MDR \eqref{eq1DiscURclas} as a function of $r$ and $\th$, for an input state $\rho_S = (\id + r \sg_z)/2$. Notice that, when $\rho_S = \dya{0}$, corresponding to $r=1$, \eqref{eq1DiscURclas} is perfectly tight \textit{for all values of the measurement strength $\th$}, with $\sD(\rho_S, \bZ, \cE)   + \sE(\rho_S, \bX, \cE) =1$. Previous state-dependent MDRs \cite{Ozawa04,HallPhysRevA.69.052113,Branciard13} using r.h.s.\ of \eqref{eq:Robertson} for the bound are very untight in this case, giving a trivial bound. Note that a comparison to the literature is meaningful here since the interaction preserves the $\bX$ observable hence predictive and retrodictive error are identical.

\begin{figure}[t] %Word count 200
\begin{center}
\includegraphics{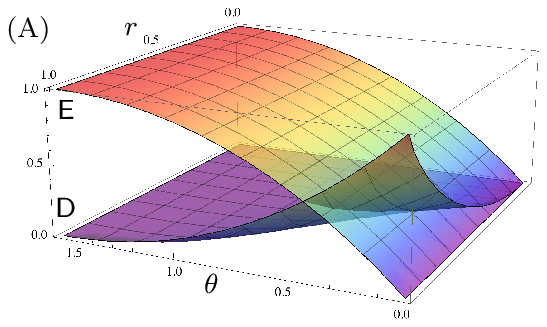}
\includegraphics{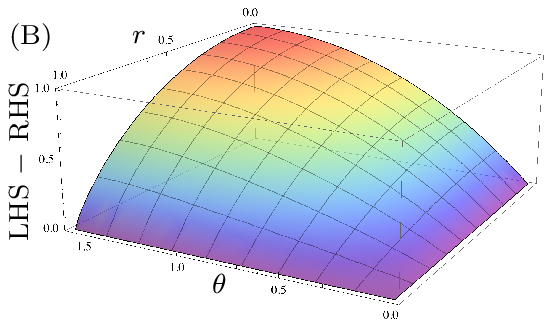}
\caption{For the photon polarization experiment in \cite{TorontoPRL2012}, we plot (A) the error $\sE(\rho_S, \bX, \cE)$ and disturbance $\sD(\rho_S, \bZ, \cE) $, and (B) the difference between the left- and right-hand-sides of \eqref{eq1DiscURclas}. We consider input states $\rho_S = (\id + r \sg_z)/2$ and $\th$ determines the measurement strength (see text; $\th =0$: perfect $\bX$ measurement, $\th = \pi/2$: completely noisy $\bX$ measurement).\label{fgr3Toronto}}
\end{center}
\end{figure}

Let us now consider the memory-assisted relation~\eqref{eq1DiscUR}. It turns out that the experimenters chose the optimal measurement (the standard basis) on $M$ for minimising our error measure. To show this let us define the state $\ket{\chi}=\ket{\chi_u}/\sqrt{N}$ with $\ket{\chi_u} = \sum_x \sqrt{p_{x}} \ket{\phi_x}$, $N = \ip{\chi_u}{\chi_u}$ and $p_{x} = \mte{\bX_x}{\rho_S}$. Chosing now $\ket{\chi}$ as a candidate to achieve the maximization in the definition of the quantum conditional max-entropy (i.e.~\eqref{def:QMentropy} for $\alpha=1/2$), we obtain a lower bound on the error which turns out to be equal to~\eqref{eq:ErrorExp}. Since measuring system $M$ has to increase $H_{\max}(X|M)$, we can conclude that $H_{\max}(X|M)$ actually coincides with~\eqref{eq:ErrorExp}. 

However, the situation changes dramatically if we allow the sender to possess a memory photon, whose polarisation $\RR$ is initially correlated to polarisation $S$. For example, suppose $\rho_S = (\id + r \sg_z)/2 $ as in Fig.~\ref{fgr3Toronto} and let $\rho_{S\RR}$ be such that $\RR$ is perfectly correlated to the $\bZ$ observable on $S$. Then, we have $\sD(\rho_{S\RR}, \bZ, \cE) = 1-\log (1+\sin \th)$, $\sE(\rho_S, \bX, \cE) = \log (1+\sin \th) $, and $H(Z|\RR )_{\rho} = 0$. In other words, \eqref{eq1DiscUR} is satisfied with equality for all values of the parameters $r$ and $\th$ with $\sD(\rho_{S\RR}, \bZ, \cE) + \sE(\rho_S, \bX, \cE) =1$. We note that, e.g., setting $\RR$ to be a classical system would allow experimentalists to easily test our memory-assisted MDR, and the fact that \eqref{eq1DiscUR} is actually an equation for all $r$ and $\th$ would make it a highly sensitive test. 

\section{Position and momentum}

Let us now consider our MDR for Heisenberg's original setup of position and momentum observables $\bQ$ and $\bP$. We consider two situations, namely, coarse grained and continuous outcome measurements. In the former situation, we have positive operator valued measures corresponding to position and momentum projections onto intervals of length $\delta q$ and $\delta p$, which we denote by $\bQ_{\delta q}$ and $\bP_{\delta p}$, respectively. The outcome range of these measurements $Q_{\delta q}$ and $P_{\delta p}$ are discrete but infinite, and each outcome relates to a position and, respectively, momentum in a unique interval of length $\delta q$ and $\delta p$. The situation of continuous outcomes which we refer simply by $Q$ and $P$ can then be seen as obtained in the limit $\delta q, \delta p \rightarrow 0$. 

First, we generalize the definition of disturbance and predictive error to the infinite-dimensional setting. Due to technical reasons, we focus on the disturbance measure based on the quantum relative entropy ($\alpha=1$). The quantum relative entropy is well studied in infinite-dimensional systems and we use the definition based on the spatial derivative operator (see, e.g.,~\cite{Petz93,Berta13} and references therein). This definition applies similarly for continuous and discretized outcome measurements such that $\sD(\rho_{SR}, \bP, \cE)$ and  $\sD(\rho_{SR}, \bP_{\delta p}, \cE)$ can be defined similarly as in the finite-dimensional case via~\eqref{eq:DistMeas}. Moreover, as shown in Lemma~\ref{lem:ApprxRel}, it holds that   
\begin{equation}\label{eq:approxRelEnt}
\lim_{\delta p \rightarrow 0} \sD(\rho_{SR}, \bP_{\delta p}, \cE) = \sD(\rho_{SR}, \bP, \cE)
\end{equation} 
for any $\cE$. 

Using the general definition of the relative entropy, the von Neumann entropy of $\bP_{\delta p}$ conditioned on the quantum system $R$ can again be defined as in finite dimensions via the relation~\eqref{def:QMentropy}. The differential version of the conditional von Neumann entropy is defined as in~\cite{Berta13} and essentially given by the regularised limit
\begin{equation}
h_{}(P|R)_{\rho} = \lim_{\delta p \rightarrow 0 } \big( H(P_{\delta p}|R) + \log \delta p \big) \,   . 
\end{equation} 

In order to define the predictive error, we use the quantum conditional max-entropy $H_{\max}(Q_{\delta q}|M)_\rho$ as introduced in~\cite{Berta13}. The differential version is defined as in~\cite{Berta13} and under weak assumption is simply given by the limit  $h_{\max}(Q|M)_\rho = \lim_{\delta q \rightarrow 0} [H_{\max}(Q_{\delta q}|M)_\rho + \log \delta q] $. Emphasizing that the involved error is now a regularised quantity we denote it with a lower case letter
\begin{align}\label{eq:ErrorMeasPQ}
\se (\rho_S, \bQ, \cE) &:= h_{\max}(Q|M)_{\rhoE} \, .
\end{align}

\begin{theorem}
Let $S$ be a position-momentum system and $R$ and $M$ quantum systems described by separable Hilbert spaces $\cH_R$ and $\cH_M$. For any state $\rho_{SMR}$ on systems $SMR$ and any completely positive map $\cE$ from $S$ to $SM$, it holds that 
\begin{align}\label{eq:DiscMDRpq}
\sD(\rho_{SR}, \bP_{\delta p}, \cE)   + \sE(\rho_{S}, \bQ_{\delta q}, \cE)  & + H(P_{\delta p}|R)_{\rho} \\  \quad &\geq  \log c(\delta q,\delta p) \, , \nonumber 
\end{align}
where $c(\delta q,\delta p) = \delta q \delta p /(2\pi\hbar) S_0^{(1)}(1,\delta q \delta p/4) \approx (\delta q \delta p)/(2\pi\hbar)$ with $S_0^{(1)}(1,\cdot)$ the 0th radial prolate spheroidal wave function of the first kind~\cite{Slepian1964}.
Moreover, if $\delta,\delta'$ exists such that $\sD(\rho_{SR}, \bP_{\delta}, \cE) < \infty$ and $H(P_{\delta'}|R)_{\rho}<\infty$, and $h_{}(P|R)_{\rho} \geq -\infty$, it holds that
\begin{align}
\label{eq:CVMDR55}
\sD(\rho_{SR}, \bP, \cE)   + \se(\rho_{S}, \bQ, \cE) +h_{}(P|R)_{\rho}  \geq   \log 2\pi\hbar   \, .
\end{align}
\end{theorem}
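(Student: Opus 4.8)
The plan is to derive the coarse-grained relation \eqref{eq:DiscMDRpq} as the $\alpha=1$ specialization of the general memory-assisted MDR \eqref{eq1DiscUR} to the interval observables $\bQ_{\delta q}$ and $\bP_{\delta p}$, and then to obtain the continuum relation \eqref{eq:CVMDR55} by sending $\delta q,\delta p\to0$.

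First I would verify that the proof of \eqref{eq1DiscUR} carries over to the present infinite-dimensional, countable-outcome setting. Its two ingredients, the divergence inequality \eqref{MainInequ} (which relies only on the monotonicity and scaling properties \eqref{Cond:a}--\eqref{Cond:b} and the variational form of $\Hmin$) and the preparation uncertainty relation \eqref{eq:PrepUR} (whose position--momentum version with quantum memory is available in the literature), both remain valid for states on separable Hilbert spaces. Applying the argument to $\rho^{\cE}_{SMR}=(\cE\otimes\cI)(\rho_{SR})$ with $\bX\mapsto\bQ_{\delta q}$ and $\bZ\mapsto\bP_{\delta p}$ then reproduces \eqref{eq:DiscMDRpq}, whose bound is the overlap term $\log\tfrac1{c(\delta q,\delta p)}$ of \eqref{eq1DiscUR} with $c(\delta q,\delta p)=\max_{x,z}\|\sqrt{(\bQ_{\delta q})_x}\sqrt{(\bP_{\delta p})_z}\|_\infty^2$. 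Since the interval POVM elements are orthogonal projections, this is the squared largest singular value of a product $\Pi^Q\Pi^P$ of a position-interval and a momentum-interval projector; covariance under the Weyl phase-space translations renders it independent of where the intervals sit, so the maximization is vacuous and the value equals the top eigenvalue of the concentration operator $\Pi^P\Pi^Q\Pi^P$. This is exactly Slepian's time--frequency concentration problem, whose leading eigenvalue is the radial prolate spheroidal wave function, giving $c(\delta q,\delta p)=\tfrac{\delta q\,\delta p}{2\pi\hbar}\,S^{(1)}_0(1,\delta q\,\delta p/4)$.

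For \eqref{eq:CVMDR55} I would pass to the limit $\delta q,\delta p\to0$ term by term in \eqref{eq:DiscMDRpq}. The disturbance converges by \eqref{eq:approxRelEnt}, while the error and the conditional entropy each diverge, but in the controlled fashion dictated by the definitions of the differential quantities, namely $\sE(\rho_S,\bQ_{\delta q},\cE)=\se(\rho_S,\bQ,\cE)-\log\delta q+o(1)$ and $H(P_{\delta p}|R)_\rho=h(P|R)_\rho-\log\delta p+o(1)$. Because $\log\tfrac1{c(\delta q,\delta p)}=-\log\delta q-\log\delta p+\log2\pi\hbar-\log S^{(1)}_0(1,\delta q\,\delta p/4)$, the two pairs of divergent $\log\delta$ contributions cancel, and since $S^{(1)}_0(1,0)=1$ the remaining constant is exactly $\log2\pi\hbar$. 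The hypotheses that $\sD(\rho_{SR},\bP_\delta,\cE)$ and $H(P_{\delta'}|R)_\rho$ be finite for some $\delta,\delta'$ and that $h(P|R)_\rho>-\infty$ are precisely what rules out an $\infty-\infty$ in this rearrangement and guarantees that the three regularized limits exist.

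The main obstacle I expect is this last limiting step rather than the algebra: three quantities must be made to converge simultaneously while two of them blow up, so one has to control the $o(1)$ remainders in the differential-entropy regularizations and, most importantly, establish the relative-entropy approximation \eqref{eq:approxRelEnt}. Proving that the coarse-grained relative entropy converges to the continuum relative entropy defined through the spatial derivative operator is the delicate analytic input, and it is the natural place where the finiteness assumptions must be used.
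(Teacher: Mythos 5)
Your overall route is the paper's own: prove \eqref{eq:DiscMDRpq} by running the finite-dimensional argument --- inequality \eqref{MainInequ} at $\alpha=1$ combined with the coarse-grained position--momentum uncertainty relation \eqref{eq:PrepUR} from Berta \emph{et al.}, whose constant is the Slepian concentration eigenvalue --- and then obtain \eqref{eq:CVMDR55} by letting $\delta q,\delta p\to 0$, with the $\log\delta q$ and $\log\delta p$ divergences of the error and conditional-entropy terms cancelling against the divergence of the bound, and the disturbance term converging by \eqref{eq:approxRelEnt}. You also correctly identify \eqref{eq:approxRelEnt} (the paper's Lemma~\ref{lem:ApprxRel}) as the delicate analytic input for the continuum step, and your reading of the bound as $\log 1/c(\delta q,\delta p)$ with $S_0^{(1)}(1,0)=1$ is the consistent one.

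There is, however, one genuine gap hidden in your assertion that both ingredients ``remain valid for states on separable Hilbert spaces.'' Inequality \eqref{MainInequ} is derived with the conditional entropy defined by the optimization \eqref{def:QMentropy}, $H(A|B)_\rho=-\inf_{\sigma_B}D(\rho_{AB}\,\Vert\,\id\otimes\sigma_B)$, whereas the infinite-dimensional theory you must invoke for \eqref{eq:PrepUR} and for the regularized quantities $h(P|R)$ and $h_{\max}(Q|M)$ (Petz; Berta \emph{et al.}) defines the conditional von Neumann entropy as $H(X|B)=-D(\rho_{XB}\,\Vert\,\id_X\otimes\rho_B)$ with the \emph{fixed} marginal $\rho_B$. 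Since $\rho_B$ is only one candidate in the infimum, the optimized entropy dominates the fixed-marginal one, so the chain of inequalities you describe yields the relation only with the larger entropy on the left-hand side --- a strictly weaker statement unless the two definitions are shown to coincide. In finite dimensions their equality is the standard chain rule $D(\rho_{XB}\Vert\eta_X\otimes\sigma_B)=D(\rho_{XB}\Vert\eta_X\otimes\rho_B)+D(\rho_B\Vert\sigma_B)$, but here the outcome set of $\bP_{\delta p}$ is countably infinite, so $\id_X$ is not trace class and that chain rule cannot be applied directly; one must truncate to finitely many outcomes and pass to the limit using lower semicontinuity of the relative entropy. This is precisely what the paper establishes in Lemma~\ref{lem:InfCondvNentropy}, and it is needed both for \eqref{eq:DiscMDRpq} as stated and to legitimately import the Berta \emph{et al.} regularization results in your limiting argument for \eqref{eq:CVMDR55}.
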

\begin{proof}
The proof for coarse grained position measurements is in complete analogy to the one in the finite-dimensional setting.  In particular, inequality~\eqref{eq:PrepUR} was shown in~\cite{Berta13} with $c$ given by $c(\delta q,\delta p)$. Moreover, inequality~\eqref{MainInequ} is valid for $\alpha =1 $ since properties~\eqref{Cond:a} and~\eqref{Cond:b} remain true in this more general setting~\cite{Petz93}. The only subtle point is that in~\cite{Petz93,Berta13} the conditional von Neumann entropy is defined as $H(X|B) = -D(\rho_{XB}|| \id_X \otimes \rho_B)$ and not via an optimization as in~\eqref{def:QMentropy}, which was used to arrive at~\eqref{MainInequ}. But the equivalence of these two definitions is shown in Lemma~\ref{lem:InfCondvNentropy} via a chain rule of the relative entropy. Finally, the version for continuous outcome measurements is obtained by taking the limit $\delta q,\delta p\rightarrow \infty$ using~\eqref{eq:approxRelEnt} and results from~\cite{Berta13}. 
\end{proof}

In the following, we illustrate our position-momentum MDR with two examples.

\subsection{The Heisenberg microscope}

Consider the situation in which the interaction $\cE$ corresponds to an instrument that performs a projective coarse grained measurement with discretization $\delta q$, i.e., a coherent measurement of $\bQ_{\delta q}$. Applying our MDR for $\bQ_{\delta q}$ and $\bP_{\delta p}$ in~\eqref{eq:DiscMDRpq} and taking the limit $\delta p\rightarrow 0$, we find that the error term $\sE(\rho_S, \bQ_{\delta q}, \cE)$ vanishes since the measurement is repeatable, and the complementarity constant is given by $c = \delta q /(2\pi\hbar)$. Hence, assuming that the $R$ system is trivial and taking the logarithm of our MDR we arrive at a relation of the form
\begin{align} \label{eq:ProdEDR}
\delta q \cdot d_p  \geq   \hbar / 2  \, .
\end{align}
This looks similar to~\eqref{eq:Robertson} where the error is given by $\delta q$ and the disturbance by $d_p := {2^{ h(P)_{\rho}}} 2^{\sD(\rho_{S}, \bP, \cE)}/(4 \pi)$. However,
$d_p$ is lower bounded by $2^{ h(P)_{\rho}}/(4 \pi)$, and thus, accounts for the uncertainty of the initial momentum distribution. Note that this is necessary since we can always choose an initial wave function that is confined to one measurement bin, i.e., with a position standard deviation much smaller than $\delta q$. Thus, no momentum disturbance results from the measurement. But this comes at the cost of a high initial momentum uncertainty, revealing an interesting interplay between preparation and measurement uncertainty.

\subsection{Covariant approximate position measurements}

As a second example, we consider an experimental setup that can be implemented using for instance optical systems~\cite{GrangierNat98,Porta89}. The interaction is given by a quantum nondemolition measurement implementing a covariant approximate position measurement discussed by von Neumann~\cite{vonNeumann} and Davies~\cite{Davies}. In particular, we assume that $S$ interacts with a similar meter system $M$ through a Gaussian operation acting in the Heisenberg picture according to $(\hat Q,\hat P,\hat Q',\hat P') \mapsto (\hat Q,\hat P-\hat P',\hat Q' + \hat Q,\hat P')$, where $\hat Q,\hat P$ and $\hat Q' , \hat P'$ denote position and momentum operators of system $S$ and $M$, respectively. After the interaction, the position of the meter system is measured. 

If the input state on $S$ and $M$ are assumed to be pure Gaussian states with position variance $V_S$ and $V_M$, respectively, the disturbance and error can be explicitly calculated. In the following, the parameter $\lambda := V_M/V_S$ can be interpreted as the effective resolution of the approximate position measurement~\cite{Davies}. 
The error term is given by $\se(\rho_{S}, \bQ, \cE) =  h_{\max}(Q|Q')_{\rho} $, where $\rho_{QQ'}$ denotes the joint probability distribution of $Q$ and $Q'$ after the interaction. Denoting the wave function of the initial state for $S$ and $M$ by $\psi_S$ and $\xi_M$, it is straightforward to see that $\rho_{QQ'}(q,q')= \vert \psi_S(q)\xi_M(q'-q)\vert^2$. Using the formula for the differential conditional max-entropy in Lemma~\ref{lem:ClassCondMax}, a simple computation gives $\se(\rho_{S}, \bQ, \cE) = \log 2 \sqrt{{ 2\pi V_S }/({1 + 1/\lambda)}} $. 

The disturbance $\sD(\rho_{S}, \bP, \cE) = D(\rho_P||\rho_P^\cE)$ can be computed by noting that $\rho_P(p)= |\hat \psi_S(p)|^2$ and $\rho_{PP'}^\cE =|\hat\psi(p+p') \hat\xi_M(p')|^2$, where $\hat f$ denotes the Fourier transform of $f$.  We then find that $D(\rho_P||\rho_P^\cE) =   - h(P)_\rho + \log[{\hbar}/{2}\sqrt{{2\pi(1+1/\lambda)}/{V_S}}] 
 +  1/[{2 \ln (2) (1+1/\lambda)}]$. 

We can now analyze the tightness of the MDR in~\eqref{eq:CVMDR55}. Computing the gap between the l.h.s. and r.h.s of~\eqref{eq:CVMDR55} gives $1 /(2 \ln 2) (1+1/\lambda)^{-1}$. Thus, the gap depends only on the effective resolution $\lambda$ and closes as $\lambda$ approaches $0$ proving tightness of our MDR~\eqref{eq:CVMDR55}.

\section{Conclusion}
We presented a state-dependent measurement disturbance relation, which in contrast to most previous relations, includes a predictive error rather than measurement accuracy. The disturbance as well as the predictive error are quantified by entropic quantities with clear statistical meaning. We demonstrated the tightness of our MDR with various examples including approximate position and momentum observables. We further introduced the novel concept of memory-assisted disturbance, where a quantum memory helps to reveal the disturbing effects of a measurement; this idea could be further explored using other measures or approaches. 

We remark that the factor $\log (1/c)$ in \eqref{eq1DiscUR} might be improved upon when $\bX$ and $\bZ$ are not MUBs. For example, for $\al = 1$ it can be replaced by a stronger bound using the approach of \cite{coles14}. Majorization approaches \cite{partovi11, puchala13, friedland13} might also be useful along these lines.

We further remark that our approach is closely connected to the security of reverse reconciliation quantum key distribution protocols in which the key is extracted from the receiver's measurement data. Such protocols are essential for long distance continuous variable quantum key distribution~\cite{Grosshans03}, due to their robustness against fiber losses. In our setup, the eavesdropper is modeled by the channel $\cE$ that extracts information $M$ about the receiver's $X$ measurement. In order to detect the leaked information $M$ the receiver applies randomly a test measurement $Z$. The security is then obtained by lower bounding the uncertainty of $X$ given the eavesdropper's information $M$ by means of the channel disturbance. But this is exactly the trade-off characterized by our measurement-disturbance relation. This close relation of our approach to measurement disturbance and quantum key distribution may lead to future applications in quantum cryptography.

\appendix

\section{Technical Lemmas}

For the following approximation result, we define a coarse graining of $X=\mathbb R$ as a family of finer and finer partitions of $X$ into disjoint intervals of length $\delta = 1/2^n$ parametrised by $n\in \mathbb N$. The intervals are further defined recursively by halving every interval in the step $n$ to $n+1$. For a more detailed discussion we refer to~\cite{Berta13}.

\begin{lemma}\label{lem:ApprxRel}
Let $\rho_{XB}$ and $\sigma_{XB}$ be continuous classical quantum states over $X=\mathbb R$ with $\cH_B$ a separable Hilbert space. If $ D(\rho_{X B}||\sigma_{X B}) $ is finite, then it holds that
\begin{equation}\label{lem,eq1:ApprxMaxRel}
\lim_{\delta \rightarrow 0} D(\rho_{X_\delta B}||\sigma_{X_\delta B}) = D(\rho_{X B}||\sigma_{X B})  \, ,
\end{equation}
where the limit is taken along a coarse graining of $X$.
\end{lemma}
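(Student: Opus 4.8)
The plan is to sandwich the discretized relative entropies between the full continuous relative entropy from above and below, so that the monotone limit is forced to coincide with it. Throughout I regard $\rho_{XB}$ and $\sigma_{XB}$ as classical-quantum states on $L^2(\mathbb R)\otimes\cH_B$ with the register $X$ classical, and write $\Lambda_\delta$ for the (classical, hence CPTP) map that coarse-grains $X$ onto the dyadic partition of width $\delta$, so that $\rho_{X_\delta B}=\Lambda_\delta(\rho_{XB})$ and similarly for $\sigma$.

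First I would obtain the upper bound together with existence of the limit. Since $\Lambda_\delta$ is CPTP, data processing for the relative entropy (valid in the spatial-derivative formulation, see~\cite{Petz93}) gives $D(\rho_{X_\delta B}||\sigma_{X_\delta B})\le D(\rho_{XB}||\sigma_{XB})$ for every $\delta$. Because the partitions are nested---each refinement from level $n$ to $n+1$ halves every interval---passing from a finer to a coarser partition is itself a coarse-graining, so $\delta\mapsto D(\rho_{X_\delta B}||\sigma_{X_\delta B})$ is nondecreasing as $\delta\to 0$. A bounded monotone sequence converges, hence $L:=\lim_{\delta\to0}D(\rho_{X_\delta B}||\sigma_{X_\delta B})$ exists and satisfies $L\le D(\rho_{XB}||\sigma_{XB})$, the finiteness hypothesis ensuring this bound is finite.

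For the matching lower bound I would embed the discretized states back into the continuous space. Let $E_\delta$ be the CPTP map that spreads the bin-averaged conditional state uniformly across each interval, and put $\hat\rho_\delta:=E_\delta(\rho_{X_\delta B})$, $\hat\sigma_\delta:=E_\delta(\sigma_{X_\delta B})$. Since $\Lambda_\delta\circ E_\delta$ is the identity on the discretized algebra, two applications of data processing show $D(\hat\rho_\delta||\hat\sigma_\delta)=D(\rho_{X_\delta B}||\sigma_{X_\delta B})$, so it suffices to analyze the embedded states, which now all live on the fixed space $L^2(\mathbb R)\otimes\cH_B$. The crucial analytic step is that $\hat\rho_\delta\to\rho_{XB}$ and $\hat\sigma_\delta\to\sigma_{XB}$ in trace norm: the composites $E_\delta\circ\Lambda_\delta$ are exactly the conditional expectations onto the $\sigma$-algebras generated by the dyadic partitions, which increase to the full Borel $\sigma$-algebra, so a martingale/Lebesgue-differentiation argument applied to the operator-valued densities $x\mapsto p(x)\rho_x$ gives convergence in $L^1$ and hence in trace norm. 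Lower semicontinuity of the relative entropy then yields $\liminf_\delta D(\hat\rho_\delta||\hat\sigma_\delta)\ge D(\rho_{XB}||\sigma_{XB})$, i.e.\ $L\ge D(\rho_{XB}||\sigma_{XB})$, and combining the two bounds gives $L=D(\rho_{XB}||\sigma_{XB})$.

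The main obstacle I anticipate is the trace-norm convergence of the embedded states in the infinite-dimensional setting. While the classical marginals converge by the ordinary martingale convergence theorem, the conditional states $\rho_x$ on the separable space $\cH_B$ require a vector-valued Lebesgue-differentiation statement, and one must verify that averaging over shrinking bins is compatible with the trace norm. I would also need to check that the specific lower-semicontinuity statement available for the spatial-derivative relative entropy (as in~\cite{Berta13}) matches the trace-norm convergence established above; should the topologies fail to align directly, the fallback is to replace the explicit semicontinuity step by the variational characterization of the relative entropy as a supremum over finite measurements, using that coarse-grainings of the \emph{already classical} register $X$ exhaust the relevant measurements in the limit.
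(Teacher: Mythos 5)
Your argument is correct, but it takes a genuinely different route from the paper's. The paper fixes a partition of $\mathbb{R}$ into compact intervals $X^k$, disintegrates the relative entropy as $D(\rho_{XB}||\sigma_{XB})=\sum_k D(\rho_{X^kB}||\sigma_{X^kB})$, applies the approximation theorem of~\cite{Petz93} for increasing nets of subalgebras on each compact piece, and then exchanges the infinite sum with the limit $\delta\to 0$ via a Weierstrass $M$-test --- this last step is precisely where the hypothesis $D(\rho_{XB}||\sigma_{XB})<\infty$ enters, through dominating constants $M_k$ built from monotonicity and the bound $\log x\leq (x-1)/\ln 2$. You avoid both the compact decomposition and the interchange of limits: your upper bound is data processing along the nested dyadic refinements (also implicit in the paper's estimate $g_k(\delta)\leq D(\rho_{X^kB}||\sigma_{X^kB})$), and your lower bound replaces the subalgebra-approximation theorem by re-embedding the discretized states into $L^1(\mathbb{R};\mathcal{T}(\cH_B))$ and invoking Bochner martingale convergence plus joint lower semicontinuity. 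The two technical points you flag do resolve favourably: $L^1$-convergence of the closed martingale $E_\delta\Lambda_\delta f\to f$ holds for densities valued in an arbitrary Banach space, since dyadic step functions are dense and the conditional expectations are contractions (no Radon--Nikodym property is needed; separability of $\cH_B$ gives strong measurability of the density via Pettis); and the Araki relative entropy is jointly lower semicontinuous in \emph{both} arguments with respect to predual-norm (indeed $\sigma$-weak) convergence, e.g.\ via Kosaki's variational formula, available in~\cite{Petz93} --- but note that the semicontinuity statement the paper itself quotes (in the proof of Lemma~\ref{lem:InfCondvNentropy}) fixes the second argument, so you should cite the joint version explicitly. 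What your route buys: it is shorter, needs no compactness or summability bookkeeping, and never actually uses the finiteness hypothesis (if $D(\rho_{XB}||\sigma_{XB})=\infty$, your sandwich forces the discretized limit to be $+\infty$ as well), so you prove a slightly stronger statement; what the paper's route buys is that it relies only on one-argument approximation and monotonicity results quoted directly from~\cite{Petz93}, at the cost of the interchange-of-limits argument.
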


\begin{proof}
Let us fix an arbitrary partition in a coarse graining of $X$ with intervals of length $\delta_0$ and denote the intervals by $X^k$. By using the disintegration theory for von Neumann algebras~\cite{Takesaki1}, we get by the monotone convergence theorem that
\begin{align}
D(\rho_{X B}||\sigma_{X B})  & = \int D(\rho_B^x || \sigma^x_B) d x \\
&  = \sum_k \int_{X^k} D(\rho_B^x || \sigma^x_B) d x \\
&  = \sum_k D(\rho_{X^k B}||\sigma_{X^k B})  \, , \label{eq1,pf:ApprxRel}
\end{align}
where $\rho_{X^kB}$ denotes the state projected to the interval $X^k$ and likewise for $\sigma_{X^kB}$. Since $X^k$ is compact for every $k$, we can use the approximation result from~\cite[Corollary 5.12]{Petz93} along an increasing net of subalgebras which generates $L^\infty(X^k)$ in the $\sigma$-weak topology. Such a net of subalgebras is given by the step-function over the partitions in the coarse graining with $\delta\leq \delta_0$ which implies that
\begin{equation}
\lim_{\delta \rightarrow 0} D(\rho_{X^k_\delta B}||\sigma_{X^k_\delta B}) = D(\rho_{X B}||\sigma_{X B})  \, ,
\end{equation}
for every $k$. 

Hence it remains to exchange the infinite sum in~\eqref{eq1,pf:ApprxRel} with the limit $\delta\rightarrow 0$. For that, we verify Weierstasse' uniform convergence criterion for infinite sums by finding a uniform upper bound on $g_k(\delta) = | D(\rho_{X^k_\delta B}||\sigma_{X^k_\delta B}) | \leq M_k$ such that $\sum M_k < \infty$. By the monotonicity of the quantum relative entropy under quantum channels (see, e.g.,~\cite[Corollary 5.12]{Petz93}), we obtain
\begin{equation*}
D(\tr(\rho_{X^k_\delta B}) || \tr(\sigma_{X^k_\delta B}) ) \leq g_k(\delta) \leq D(\rho_{X^k B}||\sigma_{X^k B}) \, .
\end{equation*}
Denoting $p_k = \tr(\rho_{X^k_\delta B})$ and $q_K=\tr(\sigma_{X^k_\delta B})$, we have that the left hand side is given by $p_k\log(p_k/q_k)$ which is only strictly smaller than $0$ if $q_k > p_k$. Hence, if we define $M_k = D(\rho_{X^k B}||\sigma_{X^k B})$ if $p_k \geq q_k$ and $M_k = D(\rho_{X^k B}||\sigma_{X^k B}) +  p_k\log(q_k/p_k)$ else, we obtain that $g_k(\delta)\leq M_k$ for all $k$.  We then find that
\begin{align}
\sum_k D(\rho_{X^k B}||\sigma_{X^k B})  = D(\rho_{X B}||\sigma_{X B})  < \infty
\end{align}
by assumption. 

Moreover, if we denote $\Gamma = \{k \ | \ q_k > p_k\}$ we get that
\begin{align}
\sum_{k\in \Gamma} p_k\log(q_k/p_k) &\leq \frac{1}{\ln 2} \sum_{k\in \Gamma} p_k( \frac {q_k}{p_k} - 1) \\
&\leq \frac{1}{\ln 2}\sum_{k\in \Gamma} q_k  \, ,
\end{align}
where we used the bound $\log x \leq \frac{1}{\ln 2}(x-1)$. Hence, we find that $\sum M_k <\infty$ which completes the proof.
\end{proof}

\begin{lemma}\label{lem:InfCondvNentropy}
Let $\rho_{XB} = \sum_{x}\ket x \bra x \otimes \rho_B^x$ be a normalised classical quantum state where the classical system $X$ is discrete but possibly infinite and $\cH_B$ separable. We then have that
\begin{equation}
H(X|B) = - \inf_{\sigma_B}  D(\rho_{XB} || \id_X \otimes \rho_B) \, ,
\end{equation}
where $H(X|B) =  - \inf_{\sigma_B} D(\rho_{XB} || \id_X\otimes \sigma_B)$ and the maximization is taken over normalised density matrices $\sigma_B$.
\end{lemma}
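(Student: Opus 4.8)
The plan is to recognise the claim as the statement that the infimum in the definition $H(X|B)=-\inf_{\sigma_B}D(\rho_{XB}\,\|\,\id_X\otimes\sigma_B)$ is attained at the reduced state $\rho_B:=\Tr_X[\rho_{XB}]$. First I would establish the chain rule (Donald's identity)
\begin{equation}
D(\rho_{XB}\,\|\,\id_X\otimes\sigma_B)=D(\rho_{XB}\,\|\,\id_X\otimes\rho_B)+D(\rho_B\,\|\,\sigma_B)\,,
\end{equation}
valid for every normalised density operator $\sigma_B$ on $\cH_B$. Given this, positivity of the relative entropy, $D(\rho_B\,\|\,\sigma_B)\geq 0$ with equality iff $\sigma_B=\rho_B$, shows that the right-hand side is minimised over $\sigma_B$ exactly at $\sigma_B=\rho_B$. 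Hence $\inf_{\sigma_B}D(\rho_{XB}\,\|\,\id_X\otimes\sigma_B)=D(\rho_{XB}\,\|\,\id_X\otimes\rho_B)$, which is precisely the asserted equality of the two definitions of the conditional von Neumann entropy.

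The core computation behind the chain rule uses only that $\rho_{XB}=\sum_x\ket x\bra x\otimes\rho_B^x$ is block diagonal in $X$ and that $\id_X$ has unit spectrum, so that $\log(\id_X\otimes\sigma_B)=\id_X\otimes\log\sigma_B$. Consequently the entire $\sigma_B$-dependence of the relative entropy sits in a single cross term that only sees the marginal,
\begin{equation}
\Tr[\rho_{XB}(\id_X\otimes\log\sigma_B)]=\Tr[\rho_B\log\sigma_B]\,,
\end{equation}
and adding and subtracting $\Tr[\rho_B\log\rho_B]$ converts the difference $D(\rho_{XB}\,\|\,\id_X\otimes\sigma_B)-D(\rho_{XB}\,\|\,\id_X\otimes\rho_B)$ into $D(\rho_B\,\|\,\sigma_B)$. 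This is the one-line calculation I would carry out first, before worrying about rigour.

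The hard part will be making these manipulations rigorous in the stated generality, since $X$ is allowed to be countably infinite and $\cH_B$ is merely separable, so the relative entropy is the spatial-derivative (Araki) functional rather than the naive trace formula, and individual terms may equal $+\infty$. To handle this I would exploit the block-diagonal structure to write $D(\rho_{XB}\,\|\,\id_X\otimes\sigma_B)=\sum_x D(\rho_B^x\,\|\,\sigma_B)$ by the same commuting-subalgebra/disintegration argument invoked in the proof of Lemma~\ref{lem:ApprxRel}, and then prove the chain rule first on finite truncations of $X$ and finite-dimensional compressions of $B$ with $\sigma_B,\rho_B$ of full rank, passing to the limit via lower semicontinuity and monotone convergence of the relative entropy; alternatively one may cite the additivity of the spatial relative entropy directly (e.g.~\cite{Petz93}). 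Throughout, care is needed with the cases $\supp\rho_B\not\subseteq\supp\sigma_B$ and $D(\rho_{XB}\,\|\,\id_X\otimes\rho_B)=\infty$: in the former the identity reads $\infty=\mathrm{(finite)}+\infty$ and in the latter it reads $\infty=\infty$, and in both cases positivity of $D(\rho_B\,\|\,\sigma_B)$ still forces the infimum to be attained at $\sigma_B=\rho_B$, so the equivalence persists with the convention $+\infty$.
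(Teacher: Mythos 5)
Your core reduction is exactly the paper's: the lemma follows from the chain rule $D(\rho_{XB}\,\|\,\id_X\otimes\sigma_B)=D(\rho_{XB}\,\|\,\id_X\otimes\rho_B)+D(\rho_B\,\|\,\sigma_B)$ (which the paper does not prove from the trace formula but cites from~\cite{Petz93}, Corollary 5.20, where it is established for non-normalised density operators) together with $D(\rho_B\,\|\,\sigma_B)\geq 0$ with equality iff $\sigma_B=\rho_B$; and the paper handles the infinite-$X$ problem (where $\id_X$ is not trace class) by the same truncation-and-limit device you propose, restricting to $X_n=\{1,\dots,n\}$. The one step of your sketch that would fail as stated is the limit passage: you plan to carry the chain-rule \emph{equality} from the truncations to the full state using lower semicontinuity and monotone convergence, but both tools are one-sided, and in the truncated identity $D(\rho^n_{X_nB}\,\|\,\id_{X_n}\otimes\sigma_B)=D(\rho^n_{X_nB}\,\|\,\id_{X_n}\otimes\rho^n_B)+D(\rho^n_B\,\|\,\sigma_B)$ the middle term has a \emph{moving} reference $\rho^n_B$, whose limit is controlled by neither tool (joint lower semicontinuity gives only a liminf bound in the wrong direction for an equality). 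The paper's fix is to abandon the equality before taking the limit: since $\rho_B\geq\rho^n_B$, antitonicity of the relative entropy in its second argument ($D(\rho\,\|\,\sigma)\geq D(\rho\,\|\,\eta)$ if $\eta\geq\sigma$) gives $D(\rho^n_{X_nB}\,\|\,\id_{X_n}\otimes\rho^n_B)\geq D(\rho^n_{X_nB}\,\|\,\id_{X_n}\otimes\rho_B)$, so the reference becomes the fixed state $\rho_B$; then convergence of the sums $\sum_{x\leq n}D(\rho^x_B\,\|\,\cdot\,)$ handles the first two terms, lower semicontinuity handles $D(\rho^n_B\,\|\,\sigma_B)$, and one obtains in the limit only the inequality $D(\rho_{XB}\,\|\,\id_X\otimes\sigma_B)\geq D(\rho_{XB}\,\|\,\id_X\otimes\rho_B)+D(\rho_B\,\|\,\sigma_B)$. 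That inequality is all one needs, because the reverse direction of the lemma is trivial: $\rho_B$ is itself an admissible candidate in the infimum. With that single adjustment (which your own toolbox already contains), your argument coincides with the paper's; your handling of the support and $+\infty$ edge cases is consistent with, and somewhat more explicit than, what the paper records.
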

\begin{proof}
The claim is a direct consequence of the chain rule
\begin{equation}\label{eq:chainrule2}
D(\rho_{XB} || \eta_X \otimes \sigma_B) = D(\rho_{XB} || \eta_X \otimes \rho_B) + D(\rho_B||\sigma_B) \, ,
\end{equation}
which has been proven for non-normalised density operators~\cite[Corollary 5.20]{Petz93}. Note that if $X$ has infinite cardinality $\id_X$ is no longer a density matrix. In order to circumvent this problem we now use a limit argument. 

In the following we assume that $X=\mathbb N$. Let us define $X_n=\{1,2,...,n\} \subset X$ and $\rho^n_{X_n B}= \sum_{x\leq n} \ket x \bra x \otimes \rho_B^x$ the non-normalised state given by restricting onto $X_n$.  For every $n$ we can now apply the chain rule~\eqref{eq:chainrule2}
\begin{align*}
& D(\rho^n_{X_nB} || \id_{X_n} \otimes \sigma_B) \\
& = D(\rho^n_{X_nB} || \id_{X_n} \otimes \rho^n_B) + D(\rho^n_B||\sigma_B) \\
& \geq  D(\rho^n_{X_nB} || \id_{X_n} \otimes \rho_B) + D(\rho^n_B||\sigma_B)  \, ,
\end{align*}
where the inequality is obtained since $\rho_B \geq \rho^n_B$ and the monotonicity of the relative entropy $D(\rho||\sigma)\geq D(\rho||\eta)$ if $\eta \geq \sigma$ (see, e.g.,~\cite{Petz93}). Note now that for any $\sigma_B$ holds that $D(\rho^n_{X_nB} || \id_{X_n} \otimes \sigma_B) = \sum_{x=1}^n D(\rho_B^x||\sigma_B)$ and $D(\rho_{XB} || \id_{X} \otimes \sigma_B) = \sum_{x=1}^\infty D(\rho_B^x||\sigma_B)$, where all the terms $D(\rho_B^x||\sigma_B)$ are negative. We therefore have that $D(\rho^n_{X_nB} || \id_{X_n} \otimes \sigma_B) \rightarrow D(\rho_{XB} || \id_{X} \otimes \sigma_B)$ for $n\rightarrow \infty$~\cite{Berta13}. Taking now the limit inferior on both sides of the equation, we get that
\begin{align*}
D(\rho_{XB} || \id_{X} \otimes \sigma_B) & \geq D(\rho_{XB} || \id_{X} \otimes \rho_B) + D(\rho_B||\sigma_B)   \, ,
\end{align*}
where we used that the quantum relative entropy is lower semi-continuous, that is, $\liminf_{n\rightarrow \infty} D(\rho^n_B||\sigma_B)  \geq D(\rho_B||\sigma_B)$~\cite[Corollary 5.12]{Petz93}. Since $D(\rho_B||\sigma_B)\geq 0$ with equality if and only if $\sigma_B=\rho_B$, this establishes $\inf_{\sigma_B}  D(\rho_{XB} || \id_X\otimes \sigma_B) \geq D(\rho_{XB} || \id_{X} \otimes \rho_B) $, and thus, the claim.
\end{proof}

For the following we note that the probability distributions over $\mathbb R$ are given by the positive, normalized and measurable functions. We denote the Banach space of measurable functions $f$ such that $\int |f(x)|^p dx $ is finite by $L^p(\mathbb R)$, where $0<p<\infty$. The next Lemma generalizes a result in~\cite{TomamichelThesis2012} for finite and discrete $X$ and $Y$ to $\mathbb R$.

\begin{lemma}\label{lem:ClassCondMax}
Let $X = Y = \mathbb R$ and $P \in L^1(X\times Y)$ be a joint probability distribution such that $\hmax(X)_{P} < \infty$ and $ \hmax(Y)_P <\infty$. Then, it holds that
\begin{equation}\label{lem,eq1:ClassCondMax}
\hmax(X|Y)_P = \log \int dy \Big( \int dx \sqrt{P(x,y)} \Big)^2
\end{equation}
if the integral on the right hand side is finite.
\end{lemma}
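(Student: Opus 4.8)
The plan is to evaluate the variational characterisation of the conditional max-entropy explicitly and reduce the optimisation to a single application of the Cauchy--Schwarz inequality, exactly as in the finite-dimensional identity of~\cite{TomamichelThesis2012}, while taking care of the measure-theoretic subtleties that arise over $\mathbb R$. Recall that $\hmax(X|Y)_P = H_{1/2}(X|Y)_P$, so that the differential conditional max-entropy admits the representation
\begin{equation*}
\hmax(X|Y)_P = \sup_{\eta_Y}\big[- D_{1/2}(P_{XY}\,\|\,\id_X\otimes \eta_Y)\big] \, ,
\end{equation*}
where $\id_X$ is the Lebesgue reference measure on $X$ and the supremum runs over normalised states $\eta_Y$ on $Y$. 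Since $P_{XY}$ is diagonal in the (classical) $Y$ basis, I would first argue that restricting the supremum to classical $\eta_Y$ is without loss of generality: dephasing a general $\eta_Y$ in the $Y$ eigenbasis leaves $P_{XY}$ and $\id_X$ unchanged while diagonalising $\eta_Y$, and cannot increase $D_{1/2}$ by the data-processing inequality for the R\'enyi relative entropy, so a classical optimiser is admissible.

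Next I would insert the explicit classical form of the $\alpha=1/2$ R\'enyi relative entropy, $D_{1/2}(P\|Q) = -2\log\int \sqrt{P\,Q}$, to obtain
\begin{equation*}
- D_{1/2}(P_{XY}\,\|\,\id_X\otimes \eta_Y) = 2\log \int dy\, \sqrt{\eta(y)}\, g(y) \, , \qquad g(y):=\int dx\,\sqrt{P(x,y)} \, .
\end{equation*}
At this point I would record the analytic facts needed for the remaining steps: that $g$ is measurable and finite almost everywhere, that the interchange of the $x$- and $y$-integrations is licensed by Tonelli's theorem, and --- crucially --- that $g\in L^2(dy)$, which is precisely the hypothesis that the right-hand side $\int dy\, g(y)^2$ is finite. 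The assumptions $\hmax(X)_P<\infty$ and $\hmax(Y)_P<\infty$ enter here to guarantee that the relative entropies involved are genuinely defined (no $\infty-\infty$ ambiguity) and that the marginals are regular enough for these manipulations.

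The optimisation is then immediate: by Cauchy--Schwarz in $L^2(dy)$,
\begin{equation*}
\int dy\,\sqrt{\eta(y)}\,g(y) \leq \Big(\int dy\,\eta(y)\Big)^{1/2}\Big(\int dy\, g(y)^2\Big)^{1/2} = \Big(\int dy\, g(y)^2\Big)^{1/2} \, ,
\end{equation*}
with equality if and only if $\eta(y) \propto g(y)^2$. Because $\int g^2<\infty$, the candidate $\eta^\star(y) = g(y)^2/\int g^2$ is a bona fide probability density, so the bound is attained; substituting back gives $\hmax(X|Y)_P = 2\log(\int g^2)^{1/2} = \log\int dy\,\big(\int dx\,\sqrt{P(x,y)}\big)^2$, as claimed.

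The main obstacle is not the algebra --- which mirrors the discrete case --- but the infinite-dimensional bookkeeping: justifying the variational representation and the reduction to classical $\eta_Y$ for the continuous $D_{1/2}$ built from the spatial-derivative construction, and confirming that the Cauchy--Schwarz optimiser $\eta^\star$ is admissible and genuinely attains the supremum rather than merely bounding it. An alternative route that side-steps the direct variational argument is to apply the finite-dimensional identity of~\cite{TomamichelThesis2012} to each coarse graining $X_\delta,Y_\delta$ and pass to the limit $\delta\to 0$, absorbing the $\log\delta$ normalisation and controlling the limit by a monotone/dominated-convergence argument in the spirit of Lemma~\ref{lem:ApprxRel}; here too the finiteness hypotheses are exactly what make the limit exist and coincide with the stated integral.
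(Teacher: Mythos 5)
Your proposal is correct and follows essentially the same route as the paper's proof: both reduce $\hmax(X|Y)_P$ to the supremum of $2\log\int dy\,\sqrt{q(y)}\,\bigl(\int dx\,\sqrt{P(x,y)}\bigr)$ over probability densities $q$, interchange the order of integration (the paper invokes Fubini using the finiteness hypotheses, you invoke Tonelli, which is even a bit cleaner for the nonnegative integrand), and conclude by Cauchy--Schwarz in $L^2(Y)$ with the explicit optimiser $q^\star \propto \bigl(\int dx\,\sqrt{P(x,y)}\bigr)^2$, admissible precisely because the right-hand side is assumed finite. The only cosmetic difference is the starting point: the paper takes the classical supremum formula directly as the definition of the differential conditional max-entropy from the cited continuous-variable framework, whereas you rederive it from the quantum variational characterisation (your dephasing/data-processing step is superfluous here, since $Y$ is a classical system by hypothesis).
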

\begin{proof}
By the definition of the differential conditional max-entropy~\cite{Berta13}, we have that
\begin{equation}\label{lem,pf1:ClassCondMax}
\hmax(X|Y) = 2 \log \sup_{q} \int dx \int dy \sqrt{P(x,y)q(y)} \, ,
\end{equation}
where the supremum is taken over all probability distributions $q\in L^1(Y)$. Note that $\hmax(X|Y) \leq \hmax(X)_{P} < \infty$ and $\hmax(Y|X) \leq \hmax(Y)_P <\infty$ implies that for any probability distribution $q$ the integrals $\int (\int \sqrt{P(x,y)q(y)} dy)dx $ and $\int (\int \sqrt{P(x,y)q(y)} dx)dy $ are finite. Hence, by Fubini's theorem we can interchange the integrations to get
\begin{align*}
 & \sup_{q} \int dx \int dy \sqrt{P(x,y)q(y)}  \\
& =   \sup_{q} \int dy \Big(\int dx \sqrt{P(x,y)} \Big) \sqrt{q(y)} \, .
\end{align*}
Let us define $\phi(y) = \int dx \sqrt{P(x,y)}$ which is in $L^2(Y)$ by assumption. Since $\Vert \sqrt{q} \Vert_{L^2(Y)} = \Vert q \Vert^{1/2}_{L^1(Y)} = 1$ we have that $\sqrt{q}\in L^2(Y)$. Using that $L^2(Y)$ is a Hilbert space, we get that
\begin{align*}
\sup_{q} \int dy \phi(y) \sqrt{q(y)}  & =  \sup_{q}  \braket { \phi } {\sqrt{q}} \\
& \leq  \sup_{q}  \Vert \phi \Vert_{L^2(Y)} \Vert \sqrt{q} \Vert_{L^2(Y)} \\
& \leq \Vert \phi \Vert_{L^2(Y)} \, .
\end{align*}
Further if we take for $q$ the element $q^*$ defined via $\sqrt{q^*} = \phi/\Vert \phi \Vert_{L^2(Y)}$  the maximum is attained. Plugging in $q^*$ in~\eqref{lem,pf1:ClassCondMax} we obtain~\eqref{lem,eq1:ClassCondMax}.
\end{proof}

\section*{Acknowledgments}
We thank M. Hall, F. Buscemi, and M. Wilde for helpful comments on an earlier version. PJC is funded by the Ministry of Education (MOE) and National Research Foundation Singapore, as well as MOE Tier 3 Grant ``Random numbers from quantum processes" (MOE2012-T3-1-009).  FF acknowledges support from Japan Society for the Promotion of Science (JSPS) by KAKENHI grant No. 24-02793.

\bibliographystyle{elsarticle-num}

\end{document}